\theoremstyle{definition}
\newtheorem*{remark}{Remark}
\newtheorem{assumption}{Assumption}
\newtheorem{theorem}{Theorem}
\newtheorem{lemma}{Lemma}
\title{\LARGE \bf
	Formation Control for Moving Target Enclosing via \\ Relative Localization
}
\author{Xueming Liu, Kunda Liu,  Tianjiang Hu, and Qingrui Zhang
\thanks{All authors are with the School of Aeronautics and Astronautics, Sun Yat-sen University, Shenzhen 518107, P.R. China (\{\tt \small liuxm93, liukd\}@mail2.sysu.edu.cn, \{\tt \small hutj3, zhangqr9\}@mail.sysu.edu.cn)}
}
\begin{document}
	
	\maketitle
	\thispagestyle{empty}
	\pagestyle{empty}

 	\begin{abstract}
		In this paper, we investigate the problem of controlling multiple unmanned aerial vehicles (UAVs) to enclose a moving target in a distributed fashion based on a relative distance and self-displacement measurements.  A relative localization technique is developed based on the recursive least square estimation (RLSE) technique with a forgetting factor to estimates both the ``UAV-UAV'' and ``UAV-target'' relative positions.  The formation enclosing motion is planned using a coupled oscillator model, which generates desired motion for UAVs to distribute evenly on a circle. The coupled-oscillator-based motion can also facilitate the exponential convergence of relative localization due to its  persistent excitation nature.  Based on the generation strategy of desired formation pattern and relative localization estimates, a cooperative formation tracking control scheme is proposed, which enables the formation geometric center to asymptotically converge to the moving target. The asymptotic convergence performance is analyzed theoretically for both the relative localization technique and the formation control algorithm. Numerical simulations are provided to show the efficiency of the proposed algorithm. Experiments with three quadrotors tracking one target are conducted to evaluate the proposed target enclosing method in real platforms. 
	\end{abstract}
	
	\section{INTRODUCTION}
        %
Formation control has received extensive research attention in recent years due to both its great potential in various applications and theoretical challenges arising in coordination and control schemes \cite{ohSurveyMultiagentFormation2015,Zhang2017JA}. In particular, enclosing a moving target by a group of UAVs in formation has been investigated in various scenarios, such as tracking radio-tagged animals \cite{8250937}, providing external lighting in filming applications \cite{kratkyAutonomousAerialFilming2021}, or tracking ground vehicles from the air \cite{tangOnboardDetectionTrackingLocalization2020}, \emph{etc}. 
 The enclosing task by formation involves cooperative control of multiple vehicles maintaining a particular shape around the target based on available measurements \cite{lopez-nicolasAdaptiveMultirobotFormation2020}. Efficient target enclosing is challenging due to restricted on-board computation resources,  limited sensing capability, and the prerequiste of distributed coordination techniques, \emph{etc}. 
 
 

	Many existing works assume that the target position is known or available to UAVs \cite{brinon-arranzCooperativeControlDesign2014,kleinControlledCollectiveMotion2006}. Such an assumption can simplify the tracking control design process. However, it would lead to a restrictive framework that is not friendly to some real-world applications, \emph{e.g.}, a GPS-denied environment. To address this issue, UAVs are expected to achieve cooperative control using local measurements from onboard sensors. For example, relative position measurement-based approaches have been studied in \cite{hausmanCooperativeMultirobotControl2015,damesDetectingLocalizingTracking2017}. However, the relative position of the target is often not easy to obtain. Hence, it is preferred to find a tracking control algorithm that depends on less sensor knowledge.  Compared with the relative position measurement, the relative distance measurement is easier to acquire and less demanding in terms of the capacity of transmission and storage. Hence, relative distance measurement is mostly preferred in UAVs whose on-board communication and computation capacities are limited in general \cite{yuBearingOnlyCircumnavigation2019}. For instance, \cite{changCollaborativeTargetTracking2021} designs a tracking controller that requires distance measurements between all UAVs with respect to the target in elliptic coordinates. In \cite{6705614BearingMeasurements}, a bearing-based tracking approach is proposed, where the relative position is obtained by estimation.  However, to the authors' best knowledge, it is still an open issue for the moving target enclosing based on local relative distance measurements.

	\begin{figure}[tbp]
		\centering
		\includegraphics[width=0.95\linewidth]{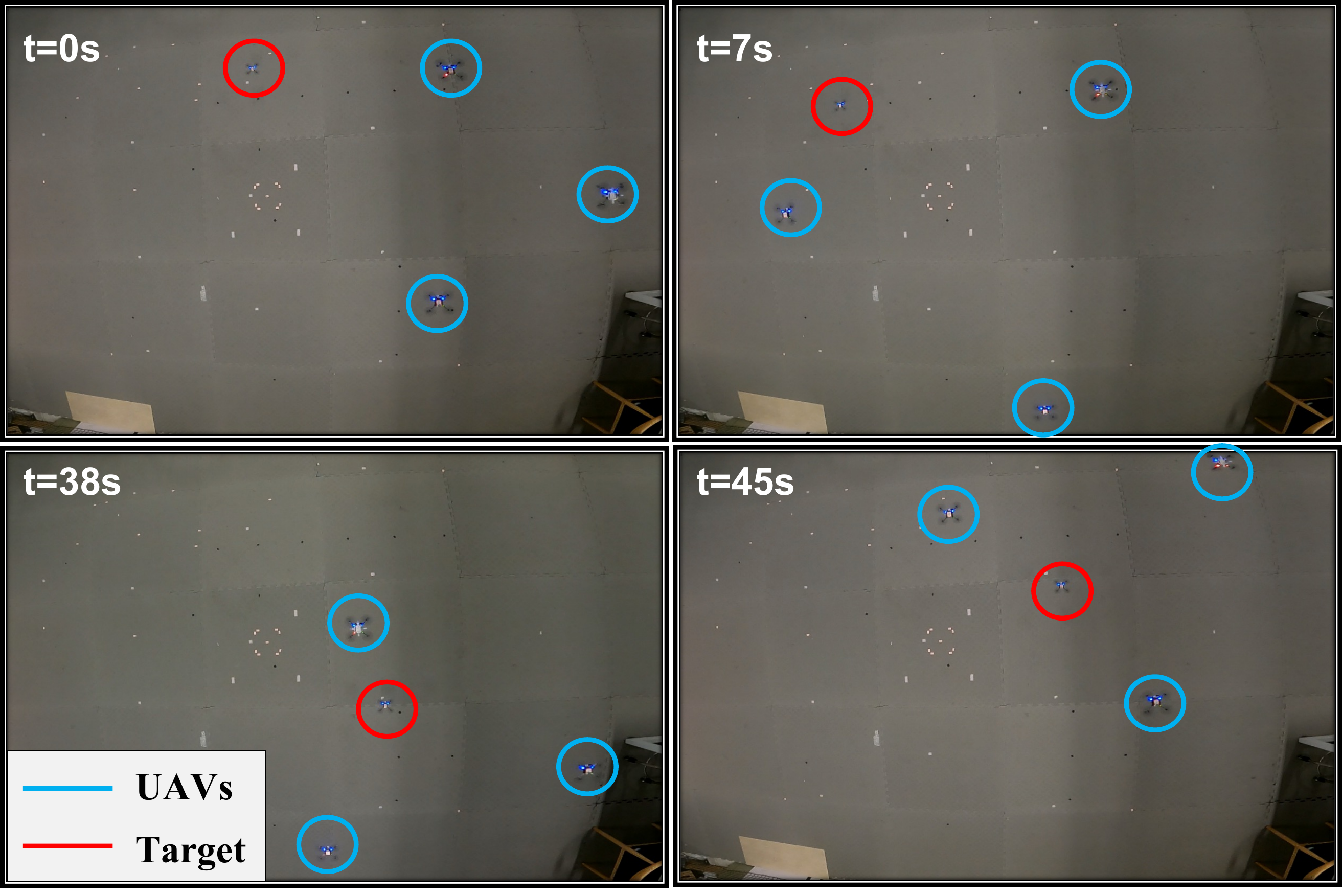}
		\caption{ Experiment of three UAVs (in blue circles) tracking a moving target (in red circle). }
		\label{fig:real}
	\end{figure} 

 


Another important assumption made by most of existing works in target enclosing  is that all UAVs in formation are capable of sensing the target. From a formation control perspective, it implies that all UAVs have access to the leader states. This assumption simplifies the cooperative control efforts among UAVs, leading to impressive tracking performance. However, it is a common scenario that only a few of UAVs can perceive the non-cooperative moving target due to a limited field of view. In some applications, it is also likely that only a few of UAVs are mounted with all the necessary sensors, while the other UAVs are only equipped with cheap or different sensors to reduce the total cost or perform other specific functions. 
The diversity sensing strategy of UAVs will enhance the ability to accomplish complex tasks \cite{ayanianDARTDiversityenhancedAutonomy2019}. In these cases,a distributed cooperative formation control is necessary for all UAVs, which drives all UAVs simultaneously to enclose a target \cite{lopez-nicolasAdaptiveMultirobotFormation2020}.



For efficient target enclosing, it is reasonable for UAVs flying in a time-varying formation \cite{dongTimeVaryingFormationTracking2017}. In time-varying formation flight, the formation patterns or shapes are expected to transform dynamically for dealing with complex tasks, for example, passing through narrow areas by reducing the relative distance \cite{brinon-arranzCooperativeControlDesign2014}, or adjusting the distributed binocular system for better observation of target by affine behaviors \cite{changCollaborativeTargetTracking2021}. More importantly,  time-varying configurations are appropriate to some applications where the number of UAVs in formation could dynamically change due to the loss of some UAVs or addition of new members. Once the number of UAVs increases or decreases, it is necessary to re-coordinate the positions of all UAVs in formation. Hence, a time-varying formation control is necessary for efficient target enclosing \cite{dorflerSynchronizationComplexNetworks2014,sepulchreStabilizationPlanarCollective2007,sepulchreStabilizationPlanarCollective2008}.

In this paper, the moving target enclosing problem is investigated in terms of multiple UAVs based on  distance and displacement measurements. The moving target is expected to be enclosed by a circular time-varying formation and tracked by the geometric center of the formation.  A distributed cooperative formation tracking control algorithm is developed based on relative localization using on-board local measurements. The coupled oscillator model is employed for the design of the cooperative formation control to enclose a moving target in a circular motion. In addition, dynamic formation patterns can be obtained by applying the affine transformation. A relative localization method is proposed based on the recursive least square estimation (RLSE) to estimate both the ``UAV-UAV'' and ``UAV-target'' relative positions. 
A formation tracking controller is eventually designed, which allows the formation geometric center to asymptotically converge to a moving target.
Both numerical simulations and experiments are performed to illustrate the efficiency of the proposed design.
In summary, the overall contributions are fourfold:
\begin{enumerate}
    
    \item A desired enclosing pattern planning module is proposed based on a coupled oscillator-based method in a distributed fashion, which could generate circular motions for UAVs to disperse evenly around a moving target.

    \item A relative localization technique is designed to estimate the relative position of a UAV to its neighbour or the moving target using the relative distance  and self-displacement measurements. 

    
    

    \item A tracking control law that is designed based on the relative position estimate. The proposed algorithm allows the UAVs to enclose the target even if only a part of the UAVs can sense the target.

    \item Theoretical analysis is provided to show the convergence performance of both the relative localization and formation tracking algorithms. 

\end{enumerate}


    \textbf{Notations:}	The following notations are defined below and will be used throughout this paper. $\mathbb{N}$  and $\mathbb{R}$ denote the sets of natural numbers and real numbers, respectively. For $k \in \mathbb{N}$, define $k^+ \triangleq k+1$ and $k^- \triangleq k-1$.  Let $\mathbb{R}^m$ be the $m$-dimensional real vector space. Let $\|\cdot\|$ be the Euclidean norm of a vector. Matrix transpose is denoted by the superscript ``\emph{T}''. $\mathbf{0}_m$ and $\mathbf{1}_m $ represent the $m$-dimensional vector of zeros and ones respectively. $\mathbf{I}$ represents the identity matrix. $ \left\lfloor M\right\rfloor $ denotes the largest integer less than or equal to $M\in\mathbb{R} $.

	\section{Preliminaries} \label{sec:prel}
    In this section, basic concepts and necessary assumptions are introduced for sensing and measurement in our work. The problem description is thereafter presented. 
    
	\subsection{Interaction Topology} \label{subsec:graph}
	The sensing and communication network of $ n $ UAVs is characterized by an undirected graph $ \mathcal{G}=(\mathcal{V},\mathcal{E}) $, where $ \mathcal{V}=\{1,2...n\} $ is the set of vertices, and $ \mathcal{E}\subset\mathcal{V}\times\mathcal{V} $ is the set of edges. A vertex $ i $ represents the $ i^{th} $ UAV. An edge $ e_{ij}=(i,j)\in\mathcal{E} $ with $ i,j\in\mathcal{V},i\ne j$ implies that UAVs $i$ and $j$ are able to communicate with each other. The set of neighbors of a UAV  $i\in\mathcal{V}$ is defined as a set $\mathcal{N}_i=\{j:(i,j)\in\mathcal{E}\}$. The weighted adjacency matrix $\mathcal{A} =[a_{ij}]\in\mathbb{R}^{n\times n}$ of a graph $\mathcal{G}$ describes  the connection strength among UAVs, where $a_{ij}>0$ if and only if $(i,j)\in\mathcal{E}$, and otherwise, $a_{ij}=0$. For an undirected graph, $a_{ij}=a_{ji}$ for all $ e_{ij}\in\mathcal{E} $.

	Additionally, the moving target is labeled as $i=0$, so an extended graph $ \mathcal{\bar{G}}=(\mathcal{\bar{V}},\mathcal{\bar{E}}) $ is introduced with $ \mathcal{\bar{V}} = \{0,1,2,...,n\} $ and $\mathcal{\bar{E}}=\{\bar{e}_{ij}=(i,j)|i,j\in\mathcal{\bar{V}},i\neq j\}$. $\mathcal{\bar{N}}_i=\{j|(i,j)\in\mathcal{\bar{E}}\}$ denotes the neighbor set of a UAV $i$ under $ \mathcal{\bar{G}}$, and  $\mathcal{N}_{0}=\varnothing$. $\mathcal{\bar{A}} =[\bar{a}_{ij}]\in\mathbb{R}^{(n+1)\times (n+1)}$ is the adjacency matrix of $ \mathcal{\bar{G}}$. In this work, $ \mathcal{G}$ should be a subgraph of $ \mathcal{\bar{G}}$. The following assumption is made for the interaction graph $\mathcal{\bar{G}}$.


	\begin{assumption} \label{assump:Topo}
		The graph $\mathcal{G}$ is complete, i.e. each UAV $i\in\mathcal{V}$ communicates with all other UAVs, and let $a_{ij}=1/\mathcal{N}_i$,$i\in\mathcal{V}$. Under the graph $\mathcal{\bar{G}}$, the target $0$ is globally reachable, which means there exist a path from any other vertex $i\in\mathcal{V}$ to $0$, and for any $i\in\mathcal{\bar{V}}$, $\boldsymbol{\Sigma}_{j\in\mathcal{\bar{N}}_i} \bar{a}_{ij}=1$. 
	\end{assumption}

	\subsection{System Dynamics and Measurements}
	Assume that UAVs' motions in the inertial coordinate frame $\mathcal{F}_w $ are modeled by first-order discrete-time dynamics with a bounded velocity in \eqref{eq:dynamics}.
 	\begin{equation}\label{eq:dynamics}
		\left\{
			\begin{aligned}
				& \boldsymbol{p}_{i}(k^{+})=\boldsymbol{p}_{i}(k)+T\boldsymbol{u}_{i}(k) \\
				& \left\lVert \boldsymbol{u}_{i}(k) \right\rVert \leq U_i , \qquad \forall i \in\mathcal{V}  
			\end{aligned}
		\right. 
	\end{equation}	
	where $\boldsymbol{p}_{i}(k) \in \mathbb{R}^{m}$ is the position vector of UAV $i$ at time $kT$ with $T$ denoting a fixed time interval, $k$ is the time step, and $\boldsymbol{u}_{i}(k) \in \mathbb{R}^{m}$ is the velocity control input whose magnitude is bounded by $U_{i}>0$. The target motion is given by
 \[\boldsymbol{p}_{0}(k^{+})=\boldsymbol{p}_{0}(k)+\boldsymbol{v}_{0}(k) \]
 where $\boldsymbol{p}_{0}(k)$ and $ \boldsymbol{v}_{0}(k)$ are the position and self-displacement of the target $0$ in $\mathcal{F}_w $.  



	It is  assumed that the relative distance $d_{ij}(k)$ and relative displacement $\boldsymbol{v}_{ij}(k)$ between $i\in\mathcal{\bar{V}}$ and $j\in\mathcal{\bar{V}}$, with $(i,j)\in\mathcal{\bar{E}} $, can be obtained or measured by certain onboard sensors, so
	\begin{equation}\label{eq:measurements}
		\left\{
			\begin{aligned}
				& d_{ij}(k) = \|\boldsymbol{p}_{i}(k)-\boldsymbol{p}_{j}(k)\| \\
				& \boldsymbol{v}_{ij}(k) = \boldsymbol{v}_{i}(k) - \boldsymbol{v}_{j}(k) \\
				& \boldsymbol{v}_{i}(k) = \boldsymbol{p}_{i}(k^{+}) - \boldsymbol{p}_{i}(k)
			\end{aligned}
		\right.
	\end{equation}
	where $\boldsymbol{v}_{i}(k)$ is the self-displacement of UAV $i$ from time step $k$ to $k^{+}$. For each UAV $i$ and its neighbor $j \in \mathcal{\bar{N}}_{i}$, it is easy to obtain the following formula by the cosine law.
	\begin{equation}\label{eq:cosinelaw}
		\zeta_{ij}(k) \triangleq \frac{1}{2} \left[ d_{ij}^{2}(k^{+})- d_{ij}^{2}(k) -\left\lVert \boldsymbol{v}_{ij}(k) \right\rVert ^{2}  \right]
	\end{equation}	
The following assumption is introduced for the moving target..
	\begin{assumption} \label{assump:sensing}
		The average velocity of the target $ \boldsymbol{u}_{0}(k) \triangleq \boldsymbol{v}_{0}(k) / T$ in time interval $T$ is bounded, \emph{i.e.}, there exist positive constant $U_0$ such that $ \left\lVert \boldsymbol{u}_{0}(k) \right\rVert \leq U_0 $ for all $k$, and $U_0 < U_i, \forall i \in\mathcal{V}$. $ \boldsymbol{u}_{0}(k)$ or $ \boldsymbol{v}_{0}(k) $, is known to all $i\in\mathcal{V}$.
	\end{assumption}

	\begin{remark}
  In practice, $\boldsymbol{v}_{i}(k)$ can be directly obtained from Visual-Inertial Odometry (VIO), Visual Simultaneous Localization and Mapping (V-SLAM), or optical flow by mature hardware modules (\emph{e.g.} \emph{PX4FLOW} and \emph{Flow deck} for \emph{Crazyflies}). In addition, UAV $i$ can measure the distance $d_{ij}(k)$, and receive the displacement $\boldsymbol{v}_{j}(k)$ sent from its neighbor $j\in\mathcal{N}_i$ by Ultra-Wide Band (UWB) sensor or ZigBee. These ranging-based sensors have great potential in many applications, due to the advantages of being lightweight and omnidirectional. Under Assumption \ref{assump:Topo}, there exist at least one UAV $i^*\in\{i:(i,0)\in\mathcal{\bar{E}},i\in\mathcal{V}\}$ can obtain or measure the target's velocity $\boldsymbol{u}_{0}(k)$ and relative distance $d_{i^{*}0}(k)$. In recent works \cite{wangNewMethodDistance2006,zhangExtendingReliabilityMmWave2019,choiMultiTargetTrackingUsing2013,pearceMultiObjectTrackingMmWave2023} , many sensing schemes based on vision, mmWave radar or LiDAR have been proposed to obtain velocity and distance measurements. Hence, UAV $i^*$ can broadcast $\boldsymbol{u}_{0}(k)$ to other through graph $\mathcal{G}$.
	\end{remark}

	\subsection{Problem Statement}
	The whole objective of this paper can be divided into the following sub-tasks.
	\begin{enumerate}
		\item Develop a relative localization method based on the relative distance and relative displacement measurements. Denote the relative position between $(i,j)\in\mathcal{\bar{E}}$ as $\boldsymbol{p}_{ij}(k)$, where $\boldsymbol{p}_{ij}(k)=\boldsymbol{p}_{i}(k)-\boldsymbol{p}_{j}(k)$. In our work, $\boldsymbol{p}_{ij}(k)$ is assumed to be unavailable. Its estimation is specified by $\boldsymbol{\hat{p}}_{ij}(k)$. Hence, the objective is to achieve 
		\begin{equation}\label{eq:estimatorerror}
			\lim_{k \rightarrow \infty } \left\lVert \boldsymbol{\tilde{p}}_{ij}(k) \right\rVert = 0, \qquad \forall (i,j) \in \mathcal{\bar{E}}
		\end{equation}
		where $\boldsymbol{\tilde{p}}_{ij}(k) \triangleq \boldsymbol{\hat{p}}_{ij}(k) - \boldsymbol{p}_{ij}(k)$.
		\item Design a cooperative time-varying formation control strategyto ensure that a group of UAVs reaches a circular motion around a moving non-cooperative target. The center of the formation is defined as 
		\begin{equation}\label{eq:formationcenter}
			\boldsymbol{p}(k) = \frac{1}{n} \sum_{i\in\mathcal{V} } \boldsymbol{p}_{i}(k) 
		\end{equation}
	\end{enumerate}
The whole objective is to allow the projection of formation geometric center on a horizontal planer to converge to the position of the moving target based on the relative localization and distributed cooperative control, namely
		\begin{equation}\label{eq:trackingerror}
			\lim_{k \rightarrow \infty } \left\lVert \boldsymbol{\bar{p}}(k) \right\rVert = 0
		\end{equation}		
		where $\boldsymbol{\bar{p}}(k) \triangleq \boldsymbol{p}(k) - \boldsymbol{p}_0(k) $ is the formation tracking error.
	\section{Circular Formation Design} \label{sec:referformation}
	The position of the target is treated as the center of the desired circular formation motion. In a desired formation pattern, all UAVs should be evenly distributed on the circle with $\boldsymbol{p}_{0}(k)$ as the center and radius $\rho$. Denote $\theta_{i}(k)$ as the angular orientation relative to the $x$-axis on the circle. The desired relative position of UAV $i$ on the circle is formally given as follows
	\begin{equation}\label{eq:desipos}
		\boldsymbol{r}_i(k) = \rho  \left[ \begin{array}{cc} \cos{\theta_i(k)} \\  \sin{\theta_i(k)} \\ \end{array} \right]
	\end{equation}
	The desired relative position is denoted by $\boldsymbol{r}_{ij}(k) \triangleq \boldsymbol{r}_i(k) - \boldsymbol{r}_j(k) $, specially, $\boldsymbol{r}_{i0}(k) = \boldsymbol{r}_i(k) $ with $\boldsymbol{r}_{0}(k) \equiv  \mathbf{0}_2$.  The desired self-displacement is denoted as $\Delta \boldsymbol{r}_{i}(k) \triangleq \boldsymbol{r}_i(k^{+}) - \boldsymbol{r}_i(k)$.
	
	The formation pattern will be achieved through collaboration among UAVs $i\in\mathcal{V}$ in a distributed fashion. To that end, the discrete-time coupled oscillator-based pattern formation method is proposed in this paper.
	\begin{equation}\label{eq:osc}
		\begin{split}
			\theta_i(k^{+}) &= \theta_i(k) + T\omega_{i} \\ &+ T\sum_{j = 1}^{n} \sum_{l = 1}^{n}  \frac{K_{l}a_{ij}}{l}\sin{\left(l\left[\theta_i(k)-\theta_j(k)\right]\right)}
		\end{split}
	\end{equation}
	where $K_{l}\in\mathbb{R} $ are user-defined gains, $\omega_{i}>0$ represents the frequency. 

	\begin{lemma}\label{them:osc}
		Under graph $\mathcal{G}$ and Assumption \ref{assump:Topo}, let $K_{l} > 0 $	for $l\in\{1,...,n-1\} $, and let $K_{n} < 0 $ be sufficiently small. Then a symmetric balanced of regular $n$-sided formation pattern is a locally exponentially stable equilibrium manifold.
	 \end{lemma}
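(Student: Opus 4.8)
The plan is to read the update \eqref{eq:osc} as a forward-Euler step of a gradient ascent on a phase potential, and then to show that the regular $n$-gon is a nondegenerate maximizer of that potential transverse to the intrinsic rotational symmetry. First I would take the natural frequencies equal, $\omega_i\equiv\omega$ (equivalently, work modulo a common rotation), and pass to the co-rotating relative phases $\phi_i\triangleq\theta_i-\omega k T$, so that a symmetric balanced pattern becomes a genuine fixed point of the relative-phase dynamics rather than a uniformly rotating configuration. Since \eqref{eq:osc} depends on the phases only through the differences $\theta_i-\theta_j$, it is invariant under the common shift $\phi_i\mapsto\phi_i+c$; this $S^1$ symmetry forces every equilibrium to be non-isolated, which is precisely why the statement asserts stability of an equilibrium \emph{manifold} and not of an isolated point.

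Next, under Assumption \ref{assump:Topo} the graph is complete with uniform weights, so the coupling can be written through the phase order parameters $p_l\triangleq\frac{1}{n}\sum_{i\in\mathcal V}e^{\,\mathrm{i}\,l\theta_i}$. Using the identity $\sum_{i,j}\cos\!\big(l(\theta_i-\theta_j)\big)=n^2|p_l|^2$, I would introduce the potential $U(\theta)=-\sum_{l=1}^{n}\frac{K_l}{2l^2}\,|p_l|^2$ (up to the constant weight $a_{ij}$ and irrelevant additive constants) and check that $\partial U/\partial\theta_i$ reproduces the bracketed double sum, so that \eqref{eq:osc} reads $\theta_i(k^{+})=\theta_i(k)+T\omega+T\,\partial U/\partial\theta_i$. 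Stable fixed points of this map then correspond to local maxima of $U$, i.e., to local minima of $\sum_l \frac{K_l}{l^2}|p_l|^2$.

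I would then evaluate the order parameters at the regular $n$-gon $\theta_i=\theta_0+2\pi i/n$, where $|p_l|=0$ for $l=1,\dots,n-1$ and $|p_n|=1$. The sign pattern $K_l>0$ for $l<n$ penalizes the first $n-1$ moments and hence drives the configuration toward the \emph{balanced} set $\{p_1=\cdots=p_{n-1}=0\}$, while the single negative gain $K_n<0$ rewards $|p_n|=1$ and thus selects, among all balanced configurations, the symmetric $n$-gon; this pins down the pattern claimed in the lemma. For exponential stability I would compute the Hessian $\nabla^2U$ at this point and argue that it is negative definite on the orthogonal complement of the symmetry direction $\mathbf 1_n$, with $\mathbf 1_n$ spanning its kernel. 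Linearizing the discrete map gives the Jacobian $I+T\,\nabla^2U$, whose eigenvalues are $1$ along $\mathbf 1_n$ and $1+T\mu_j$ with $\mu_j<0$ transversally; choosing $T<2/\max_j|\mu_j|$ places the latter strictly inside the unit circle, and a center-manifold / symmetry-reduction argument upgrades this transverse contraction to local exponential stability of the rotating equilibrium manifold.

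The main obstacle is the Hessian analysis at the $n$-gon. Because the balanced set is itself a positive-dimensional manifold, the Hessian built from the $K_l>0$ ($l<n$) terms alone is merely negative \emph{semi}definite, degenerate along the balanced directions. The delicate point is to show that switching on a small $K_n<0$ breaks this degeneracy in the correct way — rendering $U$ strictly concave \emph{within} the balanced manifold (thereby isolating the $n$-gon) while perturbing the already-negative transverse block so little that no transverse eigenvalue changes sign. This perturbation estimate is exactly the content of the ``sufficiently small $K_n$'' hypothesis and the reason the conclusion is only local; the remaining steps, namely the discrete-time eigenvalue bound and the reduction by the $S^1$ action, are then routine.
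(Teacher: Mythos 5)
Your sketch follows essentially the same route as the paper, which offers no self-contained argument for Lemma \ref{them:osc} and instead defers to Theorem 7 of the cited Sepulchre--Paley--Leonard work; that theorem is proved precisely via the phase potentials $U_l\propto|p_l|^2$ built from the order parameters, the gradient structure of the coupling on a complete graph, and a sign analysis of the gains, so you are reconstructing the intended proof rather than replacing it. Two comments. First, your ``main obstacle'' is less delicate than you fear: the set $\{p_1=\cdots=p_{n-1}=0\}$ is not a positive-dimensional balanced manifold but (by Newton's identities, since vanishing of the first $n-1$ power sums forces the $e^{\mathrm{i}\theta_j}$ to be the $n$-th roots of a unimodular number) exactly the rotation orbit of the regular $n$-gon, and correspondingly the first variations $\delta p_1,\dots,\delta p_{n-1}$ are, up to nonzero factors, the nonzero-frequency DFT coefficients of $\delta\theta$, so the Hessian contributed by the $K_l>0$ terms alone is already negative definite transverse to $\mathbf{1}_n$. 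The role of the ``sufficiently small'' $K_n<0$ is therefore only that the perturbation it adds must not destroy this transverse definiteness (while its sign ensures the $n$-gon remains a maximizer rather than a saddle of the full potential), which is a routine continuity estimate rather than a degeneracy-breaking step. Second, your discrete-time eigenvalue condition $1+T\mu_j\in(-1,1)$, i.e.\ a step-size restriction on $T$ relative to the Hessian spectrum, is a genuine requirement of the forward-Euler form of \eqref{eq:osc} that the paper's citation to a continuous-time result silently omits; making it explicit is a real improvement, and with it your argument is sound.
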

	\begin{proof}
		Equation \eqref{eq:osc} is a discrete form of the Kuramoto model \cite{kleinIntegrationCommunicationControl2007}. In our work, a particular case of $(m, n)$-pattern problem \cite{dorflerSynchronizationComplexNetworks2014,sepulchreStabilizationPlanarCollective2007,sepulchreStabilizationPlanarCollective2008} is considered, where $m=n$.   For further proof, readers of interest can refer to Theorem 7,\cite{sepulchreStabilizationPlanarCollective2008}.
	\end{proof}

	\begin{remark}
		The formation pattern should be dynamically adjusted to effectively deal with complex environments or varied situations during target tracking.  For example, it might  need to pass through a narrow area in a dynamic narrowing manner (see Fig.~\ref{fig:shapedy}). Hence, three affine transformations (translation, scaling, and shearing) are used to achieve this object. Simply define the homogeneous coordinates of a vector $\boldsymbol{q}\in\mathbb{R}^2 $ as $\boldsymbol{q}^{h}=(x,y,1)^{T}\in\mathbb{R}^3 $. The affine transformations, translation, scaling, and shearing are respectively defined by the following matrices
		\begin{equation*}
			\begin{split}
			&\mathbf{T} = \left[ \begin{array}{ccc} 1& 0& T_x \\  0& 1& T_y \\ 0& 0& 1\\ \end{array} \right] , \mathbf{S} = \left[ \begin{array}{ccc} S_x& 0& 0 \\  0& S_y& 0 \\ 0& 0& 1 \\ \end{array} \right] , \\ &\mathbf{H} = \left[ \begin{array}{ccc} 1& H_a& 0 \\  H_b& 1& 0 \\ 0& 0& 1\\ \end{array} \right]
			\end{split}
		\end{equation*}
	where $T_x$, $T_y$, $S_x$, $S_y$, $H_a$, and $H_b$ are external parameters, and $S_x,S_y > 0$. The translation items will make the tracking target out of the formation geometry center and can be expressed in a matrix multiplication of the form $\boldsymbol{q}^{'} = \mathbf{T} \boldsymbol{q}^{h}$ where $\boldsymbol{q}^{'}$ is expressed in homogeneous coordinates. Similarly, scaling and shearing transformation of formation can be reached by $\boldsymbol{q}^{'} = \mathbf{S} \boldsymbol{q}^{h}$, and  $\boldsymbol{q}^{'} = \mathbf{H} \boldsymbol{q}^{h}$ respectively. A scaled circle or ellipse can be obtained by adjusting the parameters.
		Moreover, based on the formation pattern formed by the coupled oscillator, dynamic additions and subtractions of UAVs will be allowed. As shown in Fig.~\ref{fig:shapedy}, UAVs will coordinate their relative positions until an equilibrium is restored. It is very suitable for the failure of a UAV during a task. 
	\end{remark}
	\begin{figure}[tbp]
		\centering
		\includegraphics[width=0.95\linewidth]{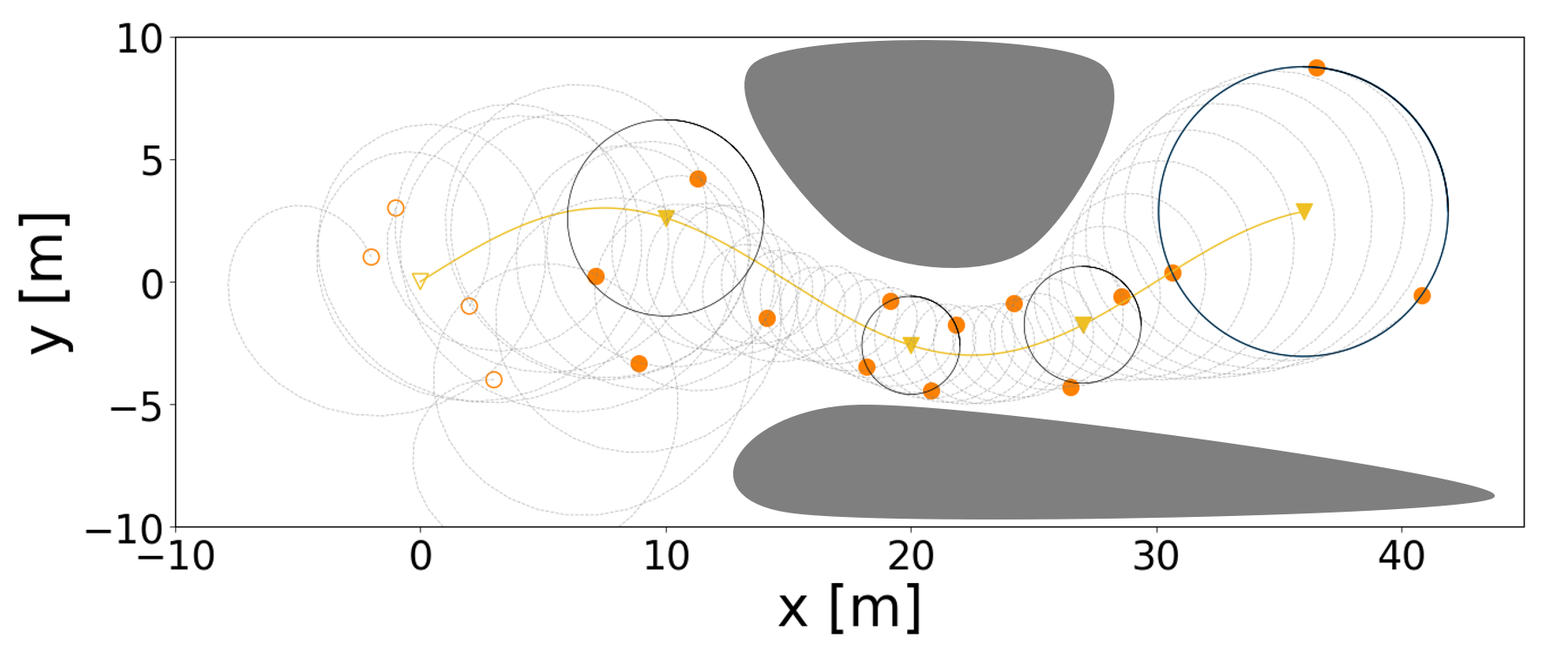}
		\caption{ Simulation of four UAVs (circles) tracking a moving target (triangle) and passing through narrow areas. The initial positions are represented in void style. The radius of time-varying formation is adjusted by affine transformation. At time step $k=250$, UAV $i=4$ is presumed damaged, and the remaining UAVs can still reach a balanced pattern. }
		\label{fig:shapedy}
	\end{figure}

 	\section{Relative localization and tracking control} \label{sec:RL-TC}
	In this section, a relative localization algorithm is proposed to approximate the relative position between two UAVs or UAV and the moving target. A time-varying formation tracking control algorithm is thereafter developed.  
	
	\subsection{Relative Localization Estimator}
	An online localization algorithm is introduced to estimate the relative position between $(i,j)\in\mathcal{\bar{E}}$. The localization algorithm is based on the well-known Recursive Least Square Estimation (RLSE) technique with a forgetting factor. More details on RLSE can be found in \cite{nguyenPersistentlyExcitedAdaptive2020,Zhang2022arXiv}. 
	Based on \eqref{eq:measurements}, the estimate $\boldsymbol{\hat{p}}_{ij}(k)$ is obtained by minimizing the following cost function.
		\begin{equation}\label{eq:costfunc}
			\begin{aligned}
				&J_{ij}(k) = \frac{1}{2} \sum_{l=1}^{k} \beta_f \left[\zeta_{ij}(k)-\boldsymbol{v}_{ij}^{T}(l)\left(\boldsymbol{\hat{p}}_{ij}(k)-\sum_{q=l}^{k}\boldsymbol{v}_{ij}(q)\right)  \right]^{2} \\
				& + \frac{1}{2} \beta_f \left[\boldsymbol{\hat{p}}_{ij}(k)-\sum_{q=1}^{k}\boldsymbol{v}_{ij}(q)-\boldsymbol{\hat{p}}_{ij}(0)\right]^{T} \varGamma_{ij}^{-1}(0) \\
                & \left[\boldsymbol{\hat{p}}_{ij}(k)-\sum_{q=1}^{k}\boldsymbol{v}_{ij}(q)-\boldsymbol{\hat{p}}_{ij}(0)\right]
			\end{aligned}
		\end{equation}		
	where  $\boldsymbol{\hat{p}}_{ij}(0)$ is the initial value of $\boldsymbol{\hat{p}}_{ij}(k)$, $\varGamma_{ij}(k) \in \mathbb{R}^{m\times m}$ with $\varGamma_{ij}(0)$ is a positive definite matrix, and $\beta_f<1$ is a positive constant as forgetting factor.
	By solving $\partial J_{ij}(k)/\partial \boldsymbol{\hat{p}}_{ij}(k)=0$, the least-square estimator is given as follows \cite{nguyenPersistentlyExcitedAdaptive2020}
	\begin{equation}\label{eq:estimator}
		\left\{
			\begin{aligned}
				& \epsilon_{ij}(k) \triangleq \zeta_{ij}(k) - \boldsymbol{v}_{ij}^{T}(k)\boldsymbol{\hat{p}}_{ij}(k) \\
				& \varGamma_{ij}(k^+) = \frac{1}{\beta_f}\left[\varGamma_{ij}(k) - \frac{\varGamma_{ij}(k)\boldsymbol{v}_{ij}(k)\boldsymbol{v}_{ij}^{T}(k)\varGamma_{ij}(k)}{\beta_f+\boldsymbol{v}_{ij}^{T}(k)\varGamma_{ij}(k)\boldsymbol{v}_{ij}(k)}\right] \\
				& \boldsymbol{\hat{p}}_{ij}(k^+) = \boldsymbol{\hat{p}}_{ij}(k) + \boldsymbol{v}_{ij}(k) + \varGamma_{ij}(k^+)\boldsymbol{v}_{ij}(k)\epsilon_{ij}(k)
			\end{aligned}
		\right.
	\end{equation}	

	\begin{remark}
		To ensure the exponential convergence of the relative localization
		estimator, the relative displacement $\boldsymbol{v}_{ij}(k)$ should satisfy the persistent excitation condition. A further discussion of convergence is proposed in Section \ref{sec:convergeanalysis}.
	\end{remark}

	\subsection{Formation Tracking Controller}
	Based on the desired circular formation pattern and the relative position estimates, the cooperative tracking control law as follows
	\begin{equation}\label{eq:formationtrackingcontrol}
		\left\{
			\begin{aligned}
			&  \boldsymbol{\bar{u}}_{i}(k) \triangleq -\beta \sum_{j\in\mathcal{\bar{V}}} \bar{a}_{ij}\left({\hat{\boldsymbol{p}}}_{ij}(k)-\boldsymbol{r}_{ij}(k)\right), \beta < \frac{1}{T} \\
			&  \boldsymbol{u}_i(k) = \pi_{\bar{U}_{i}}\left( \boldsymbol{\bar{u}}_{i}(k)\right) + \frac{1}{T} \Delta \boldsymbol{r}_i(k) + \boldsymbol{u}_{0}(k), i\in\mathcal{V}
			\end{aligned}
		\right.
	\end{equation}
	where $\pi_U\left( \boldsymbol{u}\right)\triangleq \boldsymbol{u} U/\max \{U,\| \boldsymbol{u}\|\}$, $\bar{U}_{i}\in(0,U_i)$ is the upper bound of consensus control term for formation. 

	\begin{remark}
		Note that in \eqref{eq:formationtrackingcontrol}, and under Assumption \ref{assump:Topo} that UAV $i^*$ can sensing the target, the term $\boldsymbol{r}_{i^*0}(k) = \boldsymbol{r}_{i^*}(k)$ drives formation center to track target. Moreover, $\hat{\boldsymbol{p}}_{i^*0}(k)$ can be replaced by the result of other relative positioning methods, which depends on the scenario and onboard sensor configuration. The term $\Delta \boldsymbol{r}_i(k)$ in \eqref{eq:formationtrackingcontrol} is defined by a future desired setpoint, thus it can be considered as a feedforward term. In addition, it can be seen that $\Delta \boldsymbol{r}_i(k)$ , which is differentiation, and $\boldsymbol{\bar{u}}_{i}(k)$, which only involves relative terms. Thus, our tracking control scheme is independent of the common coordinate frame $\mathcal{F}_w $.
	\end{remark}

    \section{convergence analysis} \label{sec:convergeanalysis}

	\subsection{Convergence of Relative Localization Error}
	Given Lemma \ref{them:osc}, to simplify our analysis, we assume that after time $k_oT$ the coupled oscillator is in equilibrium, thus $\Delta \theta_{ij}(k) \triangleq \theta_i(k)-\theta_j(k)=2\pi m/n$ are constants, where $k>k_o, i,j\in\mathcal{V}, m=\{1,2,...,n-1\} $. Therefore, the formation is in uniform circular motion relative to the center of the circle with a period $N = \left\lfloor 2\pi / T\omega_i \right\rfloor $. Further, it is easy to see that a series relative position $W_{ij,l} \triangleq \left[\Delta\boldsymbol{r}_{ij}(l), \Delta\boldsymbol{r}_{ij}(l+1),...,\Delta\boldsymbol{r}_{ij}(l+N-1) \right], \forall l>k_o$ distributed in a circle. Considering real physical systems, assume that $\omega_i$ is bounded. Hence, a $\omega_i$ can be easily designed, such that any two vectors in $W_{ij,l}$ are not equal. In other words, the vectors in $W_{ij,l}$ are not col-linear, which is formally stated as follows:

	\begin{assumption} \label{assump:colinear}
		For any $(i,j)\in\mathcal{\bar{E}}$, there exist $\omega_i<\Omega$, $k_o\in\mathbb{N} $ and $k_1,k_2\in\{l,l+1,...,l+N-1\} $, such that $ rank W_{ij}^* = 2, \forall l>k_o $, where $W_{ij}^* \triangleq \left[\Delta\boldsymbol{r}_{ij}(k_1), \Delta\boldsymbol{r}_{ij}(k_2)\right]$, $N = \left\lfloor 2\pi / T\omega_i \right\rfloor $ and $\Omega$ is a positive constant.
	\end{assumption}

	Based on the above assumptions, it can be seen that 
	\begin{equation}\label{eq:bounddeltar}
		\left\lVert \Delta\boldsymbol{r}_{ij}(k) \right\rVert \leq 2\rho\|\sin(T\omega_i/2)\| \leq T\rho\Omega
	\end{equation}
	Further, denote  $\lambda_{min}(W_{ij}^{*})$ and $\lambda_{max}(W_{ij}^{*})$  as the minimum and the maximum singular values of $W_{ij}^{*}$, respectively. Then we can select the sampling time $T$ that is sufficiently small
	\begin{equation}\label{eq:smallT}
		T < \frac{g(W^{*})}{2\sqrt{2}\bar{U}_i}
	\end{equation}
	where $g(W_{ij}^{*}) = \lambda_{max}(W_{ij}^{*}) - \sqrt{\lambda_{max}^2(W_{ij}^{*})-\lambda_{min}^2(W_{ij}^{*})} $,	and $W^{*} = \arg \max_{W_{ij}^{*} \in \mathcal{W}}g(W_{ij}^{*})$ with$\mathcal{W} = \{W_{ij}^{*}, \forall l \geqslant k_{o},  \forall(i,j)\in\mathcal{\bar{E}}\} $. Based on Assumption \ref{assump:colinear} and condition \eqref{eq:smallT}, the following theorem is reached.
	\begin{theorem}\label{them:rl}
		Suppose Assumption \ref{assump:colinear} holds. Under condition \eqref{eq:smallT}, for any $\left(i,j\right)\in\mathcal{\bar{E}}$, there exist $\alpha_{ij,2}\geq\alpha_{ij,1}>0$, for all $l\geq k_o$ such that $\boldsymbol{v}_{ij}(k)$ satisfy the persistent excitation condition as follows
		\begin{equation}\label{eq:PE}
			\alpha_{ij,1}I \leqslant \boldsymbol{\Phi}_{ij}\left(l\right)\triangleq\sum_{k=l}^{l+N-1}{\boldsymbol{v}_{ij}\left(k\right)\boldsymbol{v}_{ij}^{T}\left(k\right)}\leqslant \alpha_{ij,2}I
		\end{equation}	
		As a consequence, the relative position error $\boldsymbol{\tilde{p}}_{ij}(k)$ converge to $0$ exponentially under the least-square estimator \eqref{eq:estimator}.
	\end{theorem}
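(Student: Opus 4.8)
The plan is to establish the two assertions in turn: the persistent-excitation (PE) sandwich \eqref{eq:PE}, which is really a statement about the closed-loop geometry of the circular motion, and then the exponential decay of $\tilde{\boldsymbol{p}}_{ij}$, which is the classical consequence of PE for a forgetting-factor RLSE. The object that links them is the closed-loop form of the regressor $\boldsymbol{v}_{ij}(k)$. Substituting the controller \eqref{eq:formationtrackingcontrol} into \eqref{eq:dynamics} gives $\boldsymbol{v}_i(k)=T\boldsymbol{u}_i(k)=T\pi_{\bar{U}_i}(\bar{\boldsymbol{u}}_i(k))+\Delta\boldsymbol{r}_i(k)+T\boldsymbol{u}_0(k)$; since the target-velocity term cancels and the feedforward terms combine into $\Delta\boldsymbol{r}_{ij}(k)$ in the inter-agent difference, I would write
\[ \boldsymbol{v}_{ij}(k)=\Delta\boldsymbol{r}_{ij}(k)+T\boldsymbol{\delta}_{ij}(k), \]
where $\boldsymbol{\delta}_{ij}$ collects the bounded saturated consensus terms. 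Thus the desired displacements $\Delta\boldsymbol{r}_{ij}(k)$ act as a nominal regressor and the true regressor is an $O(T)$ perturbation of it.

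For the PE estimates I would treat the two inequalities asymmetrically. The upper bound is routine: \eqref{eq:bounddeltar} together with the saturation level yields $\|\boldsymbol{v}_{ij}(k)\|\le T\rho\Omega+2T\bar{U}_i$, so each rank-one summand is bounded and $\boldsymbol{\Phi}_{ij}(l)\preceq\alpha_{ij,2}I$ with $\alpha_{ij,2}=N(T\rho\Omega+2T\bar{U}_i)^2$. For the lower bound I would retain only the two indices $k_1,k_2$ guaranteed by Assumption \ref{assump:colinear} and drop the remaining nonnegative terms, so that $\boldsymbol{\Phi}_{ij}(l)\succeq V^{*}(V^{*})^{T}$ with $V^{*}=[\boldsymbol{v}_{ij}(k_1),\boldsymbol{v}_{ij}(k_2)]=W_{ij}^{*}+T\Delta^{*}$ and, by the column bound above, $\|T\Delta^{*}\|_F\le 2\sqrt{2}\,\bar{U}_i T$. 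Because $W_{ij}^{*}$ is full rank, Weyl's inequality for singular values gives $\sigma_{\min}(V^{*})\ge\lambda_{min}(W_{ij}^{*})-\|T\Delta^{*}\|_F$, and the smallness condition \eqref{eq:smallT}, built around the threshold $g(W^{*})=\lambda_{min}^2/(\lambda_{max}+\sqrt{\lambda_{max}^2-\lambda_{min}^2})\le\lambda_{min}(W^{*})$, keeps this quantity strictly positive, so that $\boldsymbol{\Phi}_{ij}(l)\succeq\alpha_{ij,1}I$ with $\alpha_{ij,1}>0$. The main obstacle is exactly this step: quantifying how far the saturated, closed-loop displacement error moves the smallest singular value of the nominal matrix and verifying that the threshold in \eqref{eq:smallT} preserves positive definiteness \emph{uniformly} over all windows $l\ge k_o$ and all edges (here the rigid rotation of the equilibrium formation helps, since it makes the singular values of $W_{ij}^{*}$ independent of $l$).

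For the convergence claim I would first reduce the measurement to a linear regression. From the cosine-law definition \eqref{eq:cosinelaw} and $\boldsymbol{p}_{ij}(k^{+})=\boldsymbol{p}_{ij}(k)+\boldsymbol{v}_{ij}(k)$ one obtains $\zeta_{ij}(k)=\boldsymbol{v}_{ij}^{T}(k)\boldsymbol{p}_{ij}(k)$, hence $\epsilon_{ij}(k)=-\boldsymbol{v}_{ij}^{T}(k)\tilde{\boldsymbol{p}}_{ij}(k)$. Substituting into \eqref{eq:estimator}, the known drift $\boldsymbol{v}_{ij}(k)$ cancels exactly and the error obeys the homogeneous recursion
\[ \tilde{\boldsymbol{p}}_{ij}(k^{+})=\left[I-\varGamma_{ij}(k^{+})\boldsymbol{v}_{ij}(k)\boldsymbol{v}_{ij}^{T}(k)\right]\tilde{\boldsymbol{p}}_{ij}(k). \]
Using the information-form identity $\varGamma_{ij}^{-1}(k^{+})=\beta_f\varGamma_{ij}^{-1}(k)+\boldsymbol{v}_{ij}(k)\boldsymbol{v}_{ij}^{T}(k)$, the candidate Lyapunov function $V(k)=\tilde{\boldsymbol{p}}_{ij}^{T}(k)\varGamma_{ij}^{-1}(k)\tilde{\boldsymbol{p}}_{ij}(k)$ satisfies $V(k^{+})\le\beta_f V(k)$ unconditionally (because $\boldsymbol{v}_{ij}^{T}(k)\varGamma_{ij}(k^{+})\boldsymbol{v}_{ij}(k)<1$), while the PE bound just established makes $\varGamma_{ij}^{-1}(k)$ uniformly bounded below and above by positive-definite matrices for $k\ge k_o+N$. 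Combining the geometric decrease of $V$ with the lower bound $\varGamma_{ij}^{-1}(k)\succeq c_1 I$ yields $\|\tilde{\boldsymbol{p}}_{ij}(k)\|^2\le (c_2/c_1)\beta_f^{\,k-k_o}\|\tilde{\boldsymbol{p}}_{ij}(k_o)\|^2$, i.e.\ exponential convergence to zero. Since this half is essentially the cited RLSE result \cite{nguyenPersistentlyExcitedAdaptive2020}, I would state the Lyapunov step briefly and lean on the reference for the standard bounds on $\varGamma_{ij}$.
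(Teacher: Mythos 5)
Your proposal is correct and its overall architecture coincides with the paper's: the same closed-loop decomposition $\boldsymbol{v}_{ij}(k)=\Delta\boldsymbol{r}_{ij}(k)+T\boldsymbol{\pi}_{ij}(k)$ with $\|\boldsymbol{\pi}_{ij}\|\le 2\bar{U}_i$, the same upper bound $\alpha_{ij,2}=NT^2(\rho\Omega+2\bar{U}_i)^2$, the same reduction of the lower bound to the two non-collinear columns $k_1,k_2$ of Assumption~\ref{assump:colinear}, and the same role for condition~\eqref{eq:smallT}. The two places where you genuinely diverge are both improvements in rigor. First, for the lower bound the paper works with the scalar quadratic $\lambda_{min}^2-2T\lambda_{max}\|{V_{ij}^{*}}^{T}\boldsymbol{x}\|+T^2\|{V_{ij}^{*}}^{T}\boldsymbol{x}\|^2$ and locates its roots to recover the threshold $g(W^{*})$ (the printed proof even carries a typo, $\lambda_{max}^2$ where $\lambda_{max}$ is meant); your route via Weyl's singular-value perturbation, $\sigma_{\min}(V^{*})\ge\lambda_{min}(W_{ij}^{*})-\|T\Delta^{*}\|_2\ge\lambda_{min}-2\sqrt{2}T\bar{U}_i>\lambda_{min}-g(W^{*})\ge 0$, reaches the identical threshold more transparently, using the algebraic identity $g=\lambda_{min}^2/(\lambda_{max}+\sqrt{\lambda_{max}^2-\lambda_{min}^2})\le\lambda_{min}$. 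Second, the paper dispatches the exponential-convergence half entirely to Theorem~IV.1 of \cite{nguyenPersistentlyExcitedAdaptive2020}, whereas you supply the reduction $\zeta_{ij}(k)=\boldsymbol{v}_{ij}^{T}(k)\boldsymbol{p}_{ij}(k)$, the homogeneous error recursion, and the Lyapunov function $V=\tilde{\boldsymbol{p}}_{ij}^{T}\varGamma_{ij}^{-1}\tilde{\boldsymbol{p}}_{ij}$ with $V(k^{+})\le\beta_f V(k)$; this is exactly the content of the cited result, so it is a self-contained expansion rather than a new idea. One point worth tightening in both treatments: the condition~\eqref{eq:smallT} must hold uniformly over all edges, so the extremizer defining $W^{*}$ should be the one \emph{minimizing} $g(W_{ij}^{*})$ over $\mathcal{W}$ (the worst case), not the maximizer as written in the paper; your parenthetical observation that the rigid rotation of the equilibrium pattern makes the singular values of $W_{ij}^{*}$ independent of the window index $l$ is the right reason uniformity in $l$ is free, but uniformity across edges still requires the minimum.
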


	\begin{proof}
		It is obvious that  $\boldsymbol{\Phi}_{ij}\left(l\right)\triangleq \sum_{k=l}^{l+N-1}{\boldsymbol{v}_{ij}\left(k\right)\boldsymbol{v}_{ij}^{T}\left(k\right)}$ is a summation of $\boldsymbol{v}_{ij}\left(k\right)\boldsymbol{v}_{ij}^{T}\left(k\right)$ during a period $N$. Given $\left(i,j\right)\in\mathcal{\bar{E}}$, one has $\boldsymbol{\pi}_{ij}\left(k\right)\triangleq \pi_{\bar{U}_{i}}\left(\boldsymbol{\bar{u}}_{i}\left(k\right)\right)-\pi_{\bar{U}_{j}}\left(\boldsymbol{\bar{u}}_{j}\left(k\right)\right)$, and $\boldsymbol{\pi}_{i0}\left(k\right)\triangleq \pi_{\bar{U}_{i}}\left(\boldsymbol{\bar{u}}_{i}\left(k\right)\right)$ if $j=0$. It can be obtain from \eqref{eq:dynamics},\eqref{eq:formationtrackingcontrol} that
		\begin{equation*}	\boldsymbol{v}_{ij}\left(k\right)=\Delta\boldsymbol{r}_{ij}\left(k\right)+T\boldsymbol{\pi}_{ij}(k)
		\end{equation*}	
	Considering that every term of the formation control law is bounded, and recalling Assumption \ref{assump:sensing} it is easy to see
		\begin{equation*}
			\left\lVert \boldsymbol{\pi}_{ij}\left(k\right) \right\rVert \leq 2\bar{U}_i
		\end{equation*}	
		Since $l$ can be regarded as the initial moment of the iteration, it incurs no loss of generality to only show \eqref{eq:PE} for$\boldsymbol{\Phi}_{ij}(1)$. To show $\alpha_{ij,1}I \leqslant \boldsymbol{\Phi}_{ij}\left(1\right)\leqslant \alpha_{ij,2}I$, it is equivalent to show that for any unit vector $\boldsymbol{x}$,  $$ \alpha_{ij,1} \leqslant \boldsymbol{x}^{T}\boldsymbol{\Phi}_{ij}\left(1\right)\boldsymbol{x} \leqslant \alpha_{ij,2} $$
		Denote
		$$ W_{ij}=\left[\Delta\boldsymbol{r}_{ij}(1),\Delta\boldsymbol{r}_{ij}(2),...,\Delta\boldsymbol{r}_{ij}(N) \right] $$
		$$ V_{ij}=\left[\boldsymbol{\pi}_{ij}(1),\boldsymbol{\pi}_{ij}(2),...,\boldsymbol{\pi}_{ij}(N) \right]$$
		Direct computation shows that
		\begin{equation*}\label{eq:innerproduc}
			\boldsymbol{x}^{T} \boldsymbol{\Phi}_{ij}(1) \boldsymbol{x} = \left\lVert W_{ij}^{T} \right\rVert^{2} + 2T\left\langle W_{ij}^{T}\boldsymbol{x}, V_{ij}^{T}\boldsymbol{x} \right\rangle + T^{2} \left\lVert V_{ij}^{T} \right\rVert^{2}
		\end{equation*}	
		By applying Cauchy-Swartz inequality, there is 
		\begin{equation*}\label{eq:Cauchy-Swartz}
			-\left\lVert W_{ij}^{T}\boldsymbol{x} \right\rVert \left\lVert V_{ij}^{T}\boldsymbol{x} \right\rVert \leq \left\langle W_{ij}^{T}\boldsymbol{x}, V_{ij}^{T}\boldsymbol{x} \right\rangle \leq \left\lVert W_{ij}^{T}\boldsymbol{x} \right\rVert \left\lVert V_{ij}^{T}\boldsymbol{x} \right\rVert
		\end{equation*}	
		Below we will separately show both sides of $\alpha_{ij,1} \leqslant \boldsymbol{x}^{T}\boldsymbol{\Phi}_{ij}\left(1\right)\boldsymbol{x}\leqslant \alpha_{ij,2}$.
		\begin{enumerate}
			\item Equivalently, we will show that there exists $\alpha_{ij,2}>0$ such that $\boldsymbol{x}^{T}\boldsymbol{\Phi}_{ij}\left(1\right)\boldsymbol{x} \leqslant \alpha_{ij,2}$ for any unit vector $\boldsymbol{x}$.
			\begin{equation*}
				\begin{aligned}
	&\boldsymbol{x}^{T}\boldsymbol{\Phi}_{ij}\left(1\right)\boldsymbol{x} \\ &\leq \left\lVert W_{ij}^{T}\boldsymbol{x} \right\rVert^{2} + 2T\left\lVert W_{ij}^{T}\boldsymbol{x} \right\rVert \left\lVert V_{ij}^{T}\boldsymbol{x} \right\rVert + T^{2} \left\lVert V_{ij}^{T}\boldsymbol{x} \right\rVert^{2} \\
					& \leq \left( \left\lVert W_{ij}^{T}\boldsymbol{x} \right\rVert + T\left\lVert V_{ij}^{T}x \right\rVert\right)^{2} \\
					& \leq NT^2\left(\boldsymbol{\rho}\Omega + 2\bar{U}_i\right)^{2} \triangleq \alpha_{ij,2} 
				\end{aligned}
			\end{equation*}	
			by recalling \eqref{eq:bounddeltar}. Thus $\boldsymbol{\Phi}_{ij}\left(1\right) \leqslant \alpha_{ij,2}I$.

			\item Equivalently, we will show that there exists $\alpha_{ij,1}>0$ such that $\boldsymbol{x}^{T}\boldsymbol{\Phi}_{ij}\left(1\right)\boldsymbol{x} \geq \alpha_{ij,1}$ for any unit vector $\boldsymbol{x}$. By recalling Assumption \ref{assump:colinear}, with $W_{ij}^{*} \triangleq \left[\Delta\boldsymbol{r}_{ij}(k_{1}),\Delta\boldsymbol{r}_{ij}(k_{2}) \right], k_1,k_2\in{1,2,...,N}$. We have $\boldsymbol{\Phi}_{ij}\left(1\right)\geq \sum_{k=\{k_1,k_2\}} \boldsymbol{v}_{ij}\left(k\right)\boldsymbol{v}_{ij}^{T}\left(k\right)\triangleq \mathbf{S}_{ij}(1) $.
			
		Similar to the preceding, direct computation shows that
		\begin{equation*}
			\begin{aligned}
				& \boldsymbol{x}^{T}\boldsymbol{\Phi}_{ij}\left(1\right)\boldsymbol{x} \geq \boldsymbol{x}^{T}\mathbf{S}_{ij}\left(1\right)\boldsymbol{x} \\
				& \geq \left\lVert {W_{ij}^{*}}^{T}\boldsymbol{x} \right\rVert^{2} - 2T\left\lVert {W_{ij}^{*}}^{T}\boldsymbol{x} \right\rVert \left\lVert {V_{ij}^{*}}^{T}\boldsymbol{x} \right\rVert + T^{2} \left\lVert {V_{ij}^{*}}^{T} \right\rVert^{2} \\
				& \geq \lambda_{min}^{2} - 2T\lambda_{max}^{2} \left\lVert {V_{ij}^{*}}^{T}\boldsymbol{x} \right\rVert + T^{2} \left\lVert {V_{ij}^{*}}^{T} \right\rVert^{2} 
			\end{aligned}
		\end{equation*}	

		where $0<\lambda_{min}\leq \lambda_{max}$ are respectively the smallest and the largest singular values of $W_{ij}^{*}$, and $V_{ij}^{*} \triangleq \left[\boldsymbol{\pi}_{ij}(k_1),\boldsymbol{\pi}_{ij}(k_2)\right]$. Noticed that $\left\lVert {V_{ij}^{*}} \right\rVert \leq  2\sqrt{2} \bar{U}_i$, we can see that if $\alpha_{ij,1} < \lambda_{min}^{2}$, then the above can be achieved with $2\sqrt{2}T \bar{U}_i \leq \lambda_{max} - \sqrt{\lambda_{max}^{2} - \lambda_{min}^{2}}$. Taken $W_{ij}^{*} = W^{*} $ in \eqref{eq:smallT} , we have shown that $\boldsymbol{\Phi}_{ij}\left(1\right) \geq \alpha_{ij,1}I$
			
		\end{enumerate}

		By combing 1) and 2) the persistent excitation condition of $\boldsymbol{v}_{ij}(k)$ can be obtained. After that, to study the convergence of the estimator \eqref{eq:estimator}, the similar process as in the Theorem IV.1 of \cite{nguyenPersistentlyExcitedAdaptive2020}.
	\end{proof}

	\begin{remark}
		We should note that it is still possible for $\boldsymbol{v}_{ij}(k)$ to satisfy the persistent excitation condition even during the time steps $0<k<k_o$ while \eqref{eq:osc} does not converge. This paper only describes the sufficient condition to meet the persistent excitation condition.
	\end{remark}

	\subsection{Convergence of Formation Tracking Error}
	Based on Theorem \ref{them:rl}, we will show the convergence of formation tracking error $\boldsymbol{\bar{p}}(k)$.
	
	\begin{theorem}\label{them:formationtracking}
		Let Assumption \ref{assump:Topo} and \ref{assump:sensing} hold, and select $T$ by the condition \eqref{eq:smallT}, then the center of the time-varying formation $\boldsymbol{p}(k)$ converges to the target's position $\boldsymbol{p}_0(k)$ exponentially fast under the tracking control law \eqref{eq:formationtrackingcontrol}.
	\end{theorem}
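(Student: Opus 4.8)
The plan is to reduce the center‑tracking objective to a grounded (leader‑following) consensus on per‑UAV formation errors, establish Schur stability of the nominal linear part, and then absorb both the relative‑localization error and the input saturation as perturbations. First I would introduce the per‑UAV error $\boldsymbol{\delta}_i(k)\triangleq\boldsymbol{p}_i(k)-\boldsymbol{p}_0(k)-\boldsymbol{r}_i(k)$, with the convention $\boldsymbol{\delta}_0(k)\equiv\mathbf{0}_2$ inherited from $\boldsymbol{r}_0\equiv\mathbf{0}_2$. Because Lemma \ref{them:osc} together with the standing assumption of this section keeps the oscillator in the balanced regular pattern for $k>k_o$, the phases stay evenly spaced and hence $\sum_{i\in\mathcal{V}}\boldsymbol{r}_i(k)=\mathbf{0}_2$; using \eqref{eq:formationcenter} this yields $\boldsymbol{\bar{p}}(k)=\frac{1}{n}\sum_{i\in\mathcal{V}}\boldsymbol{\delta}_i(k)$. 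Thus \eqref{eq:trackingerror} follows once every $\boldsymbol{\delta}_i(k)\to\mathbf{0}_2$ exponentially, and I would prove that stronger claim.

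Second I would derive the closed‑loop error recursion. Substituting \eqref{eq:formationtrackingcontrol} into \eqref{eq:dynamics} and using $\boldsymbol{p}_0(k^+)=\boldsymbol{p}_0(k)+\boldsymbol{v}_0(k)$, $\boldsymbol{v}_0=T\boldsymbol{u}_0$ and $\Delta\boldsymbol{r}_i=\boldsymbol{r}_i(k^+)-\boldsymbol{r}_i(k)$, the feedforward $\tfrac{1}{T}\Delta\boldsymbol{r}_i(k)$ and the target‑velocity term $\boldsymbol{u}_0(k)$ cancel exactly the motion of $\boldsymbol{r}_i$ and of the target, leaving the clean update $\boldsymbol{\delta}_i(k^+)=\boldsymbol{\delta}_i(k)+T\,\pi_{\bar{U}_i}\!\big(\boldsymbol{\bar{u}}_i(k)\big)$, with $\boldsymbol{\bar{u}}_i(k)=-\beta\sum_{j\in\mathcal{\bar{V}}}\bar{a}_{ij}\big[(\boldsymbol{\delta}_i(k)-\boldsymbol{\delta}_j(k))+\boldsymbol{\tilde{p}}_{ij}(k)\big]$, where I have used $\hat{\boldsymbol{p}}_{ij}-\boldsymbol{r}_{ij}=(\boldsymbol{\delta}_i-\boldsymbol{\delta}_j)+\boldsymbol{\tilde{p}}_{ij}$ from the definitions of $\boldsymbol{r}_{ij}$ and $\boldsymbol{\tilde{p}}_{ij}$. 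This exposes the structure: a grounded consensus anchored at the target ($\boldsymbol{\delta}_0=0$) driven by the localization error.

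Third, I would stack the followers and use $\sum_{j}\bar{a}_{ij}=1$ (Assumption \ref{assump:Topo}) to write the nominal part (set $\boldsymbol{\tilde{p}}_{ij}=0$ and drop the saturation) as $\boldsymbol{\delta}(k^+)=\big(I-\beta T\,(M\otimes I_2)\big)\boldsymbol{\delta}(k)$, where $M=I-\bar{A}_f$ and $\bar{A}_f$ is the nonnegative follower–follower block of $\mathcal{\bar{A}}$. Since the target is globally reachable (Assumption \ref{assump:Topo}), $\bar{A}_f$ is substochastic with $\rho(\bar{A}_f)<1$, so $M$ is a nonsingular M‑matrix; every eigenvalue of the update matrix has the form $(1-\beta T)+\beta T\nu$ with $|\nu|<1$, and because $\beta<1/T$ gives $0<\beta T<1$, its modulus is strictly below $1$. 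Hence the nominal map is Schur and exponentially stable.

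Finally I would close the two gaps. By Theorem \ref{them:rl} the localization error decays exponentially, so it enters the Schur‑stable recursion as a vanishing input and, by a standard cascade/input‑to‑state estimate, cannot destroy exponential convergence. The genuine obstacle is the saturation $\pi_{\bar{U}_i}$. I would exploit that $\pi$ preserves direction, $\pi_{\bar{U}_i}(\boldsymbol{\bar{u}}_i)=c_i\boldsymbol{\bar{u}}_i$ with $c_i\in(0,1]$, and take $V(k)=\tfrac12\sum_i\|\boldsymbol{\delta}_i(k)\|^2$; the first‑order decrement is governed by the quadratic form of $\mathrm{diag}(c_i)M$, which stays a nonsingular M‑matrix for any positive $c_i$, while the smallness of $T$ in \eqref{eq:smallT} controls the $O(T^2)$ overshoot term so that $\Delta V<0$ off the origin. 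The difficulty is that $M$ is non‑symmetric — the row of $i^{*}$ carries the extra target edge — and the gains $c_i$ are state dependent, so the clean quadratic form of a symmetric Laplacian is lost; I expect to resolve this either through diagonal stability of M‑matrices (a fixed positive diagonal $D$ with $D\,\mathrm{diag}(c_i)M+(\,\mathrm{diag}(c_i)M)^{T}D\succ0$), or, more simply, by first showing $\boldsymbol{\delta}$ stays bounded so that $\|\boldsymbol{\bar{u}}_i\|<\bar{U}_i$ after finitely many steps, after which the system enters the unsaturated linear regime and the Schur analysis of the previous step delivers the exponential rate. This saturation/non‑symmetry interplay is the main technical hurdle.
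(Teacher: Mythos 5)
Your first two steps coincide with the paper's: the same per-UAV error $\boldsymbol{p}_i-\boldsymbol{r}_i-\boldsymbol{p}_0$, the same cancellation of the feedforward $\frac{1}{T}\Delta\boldsymbol{r}_i$ and of $\boldsymbol{u}_0$, and the same reduction to a grounded consensus driven by the exponentially decaying localization error from Theorem~\ref{them:rl}. The problem is that you stop exactly where the proof actually has to be done. You correctly identify the saturation $\pi_{\bar{U}_i}$ as the main obstacle, but neither of your two proposed resolutions is carried out, and the second one is circular: boundedness of $\boldsymbol{\delta}$ alone gives no guarantee that $\|\boldsymbol{\bar{u}}_i\|$ ever drops below $\bar{U}_i$ — the bound you obtain on $\boldsymbol{\delta}$ could be large enough that the input stays saturated forever, and showing it shrinks is precisely the convergence claim you are trying to prove. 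The Schur analysis of the fixed matrix $I-\beta T(M\otimes I_2)$ therefore never becomes applicable, and the diagonal-stability route is only sketched as a hope.

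The paper closes this gap with a more elementary device that avoids quadratic forms altogether. Writing $\pi_{\bar{U}_i}(\boldsymbol{\bar{u}}_i)=s_i(k)\boldsymbol{\bar{u}}_i$ with $s_i(k)\in(0,1]$ (your $c_i$), it absorbs $s_i(k)$ into time-varying weights $\gamma_{ij}(k)=T\beta s_i(k)\bar{a}_{ij}$ and $\gamma_i(k)=1-T\beta s_i(k)$, which are all positive because $\beta<1/T$. Taking norms and the triangle inequality yields the componentwise recursion $\mathcal{P}(k+1)\leq\Lambda(k)\mathcal{P}(k)+\mathbf{1}_n\delta\lambda^k$ with $\Lambda(k)$ nonnegative and row sums at most $1$ (strictly less than $1$ for rows adjacent to the target). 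Boundedness follows immediately, and the exponential rate comes from showing that the two-step product $\Lambda(k+1)\Lambda(k)$ has all row sums bounded by a uniform $\eta<1$, using the fact that under Assumption~\ref{assump:Topo} every UAV reaches the target within two hops; this gives the contraction $\mathsf{Q}(k+2)\leq\eta\mathsf{Q}(k)+\delta\sum_{t=k}^{k+1}\lambda^t$ on the max-norm. This sidesteps both difficulties you flag — the non-symmetry of $M$ and the state dependence of the gains — because only nonnegativity and row sums matter. If you want to salvage your stacked-matrix route, you would need either to carry out the diagonal-stability Lyapunov argument in full (keeping the $c_i$ state-dependent), or to replace it with an infinite-product / row-sum argument of exactly this kind; as written, the proposal does not yet contain a proof of the theorem.
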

	
	\begin{proof}
		To proof $\lim_{k \rightarrow \infty } \left\lVert \boldsymbol{\bar{p}}(k) \right\rVert = 0$ is equal to proof $\lim_{k \rightarrow \infty } \left\lVert \sum_{i\in\mathcal{V} } \left( \boldsymbol{p}_{i}(k) - \boldsymbol{p}_{0}(k) \right) \right\rVert = 0$. Further, it is equal to the following convergence for any $i\in\mathcal{V}$
		\begin{equation}\label{eq:trackerro}
			\lim_{k \rightarrow \infty } \sup \left\lVert \boldsymbol{\bar{p}}_{i}(k) \right\rVert = 0
		\end{equation}
		where $\boldsymbol{\bar{p}}_{i}(k) \triangleq  \boldsymbol{p}_{i}(k) - \boldsymbol{r}_{i}(k) - \boldsymbol{p}_{0}(k)  $. By recalling \eqref{eq:dynamics} and \eqref{eq:formationtrackingcontrol}, we can find that
		\begin{equation}\label{eq:trackerrody}
			\begin{aligned}
			\boldsymbol{\bar{p}}_{i}(k^{+}) &= \boldsymbol{\bar{p}}_{i}(k) + T\pi_{\bar{U}_{i}}\left( \boldsymbol{\bar{u}}_{i}(k)\right)\\
			&= \gamma_{i}(k)\boldsymbol{\bar{p}}_{i}(k) + \sum_{j\in\mathcal{N}_{i}} \gamma_{ij}(k)\boldsymbol{\bar{p}}_{j}(k) - \boldsymbol{e}_i(k)
			\end{aligned}
		\end{equation}
		where $\gamma_{ij}(k) = T\beta s_i(k) \bar{a}_{ij} $, $\gamma_{i}(k)=1-\sum_{j\in\mathcal{\bar{N}}_{i}}\gamma_{ij}(k) = 1-T\beta s_i(k)$, by using $\boldsymbol{\Sigma}_{j\in\mathcal{\bar{N}}_i} \bar{a}_{ij}=1$, and $s_i(k) \triangleq U/\max \{U,\| \boldsymbol{u}\|\} \in (0,1]$, hence $\gamma_{i}(k), \gamma_{ij}(k)\in(0,1)$. As the result of Theorem \ref{them:rl}, there exist $\delta >0$ and $ 0< \lambda <1 $ such that $\left\lVert \boldsymbol{e}_i(k) \right\rVert \leq \delta \lambda^{k} $ for all $i\in\mathcal{V}$. By applying the trangle inequality to \eqref{eq:trackerrody}, we get 
		\begin{equation}\label{eq:trackerrodytria}
			\begin{aligned}
			\left\lVert \boldsymbol{\bar{p}}_{i}(k^{+})\right\rVert 
			&\leq  \gamma_{i}(k)\left\lVert \boldsymbol{\bar{p}}_{i}(k)\right\rVert  + \sum_{j\in\mathcal{N}_{i}} \gamma_{ij}(k)\left\lVert \boldsymbol{\bar{p}}_{j}(k)\right\rVert  + \delta \lambda^{k}
			\end{aligned}
		\end{equation}
		Accordingly, we can rewrite the inequations in matrix form for all $i\in\mathcal{V}$. Thus, we get the following componentwise inequality
		\begin{equation}\label{eq:trackerrodytriamatrix}
			\mathcal{P} (k+1) \leq \Lambda(k)\mathcal{P}(k) + \mathbf{1}_n \delta \lambda^{k}
		\end{equation}
		where $\mathcal{P}(k)\triangleq \left[ \left\lVert \boldsymbol{\bar{p}}_{1}(k)\right\rVert ,...,\left\lVert \boldsymbol{\bar{p}}_{n}(k)\right\rVert  \right] ^{T} $, and $$ \Lambda(k) \triangleq \left[ \begin{array}{cccc} \gamma_{1}(k)& \gamma_{12}(k)& \cdots&  \gamma_{1n}(k) \\  \gamma_{21}(k)& \gamma_{2}(k)& \cdots&  \gamma_{2n}(k) \\ \vdots & \vdots & \ddots & \vdots \\ \gamma_{n1}(k)& \gamma_{n2}(k)& \cdots&  \gamma_{n}(k) \\ \end{array} \right]  $$
		Clearly, $\Lambda(k)$ is a nonnegative matrix with each row sum $ \Lambda_{i}(k)\mathbf{1}_n \triangleq \gamma_{i}(k) + \sum_{j\in\mathcal{N}_{i}} \gamma_{ij}(k) = 1 - \gamma_{i0}(k) \leqslant 1 $. The equality occurs if and only if $0\notin \mathcal{\bar{N}}_{i} $. Thus, $\Lambda(k)\mathbf{1}_n\leqslant \mathbf{1}_n$, and we get $$ \mathcal{P} (k) \leq \mathcal{P}(k-1) + \mathbf{1}_n \delta \lambda^{k-1} \leq \mathcal{P}(0) + \mathbf{1}_n \delta \sum_{t=0}^{k-1}\lambda^{l}$$
		which means $ \boldsymbol{\bar{p}}_{i}(k) $ is bounded.
		After showing the boundedness of $ \boldsymbol{\bar{p}}_{i}(k) $, we will establish the convergence based on \eqref{eq:trackerrodytriamatrix} in the following.

		Under Assumption \ref{assump:Topo}, we can classify the UAVs into two groups $ \mathsf{O}_1 $ and $ \mathsf{O}_2 $. The shortest path from $i\in \mathsf{O}_1$ to the target $0$ be of length $l=1$, and $l=2$ for $i\in \mathsf{O}_2$. Denote $\mathsf{Q}(k) = \max_{i\in\mathcal{V}} \left\lVert \boldsymbol{\bar{p}}_{i}(k)\right\rVert  $. We will show that for any $i\in\mathsf{O}_l, l =1,2$, there exist $ \eta_{i}\in(0,1) $ such that
		\begin{equation}\label{eq:pibar}
			\left\lVert \boldsymbol{\bar{p}}_{i}(k+2)\right\rVert \leq \eta_{i} \mathsf{Q}(k) + \delta \sum_{t=k}^{k+1}\lambda^{t}
		\end{equation}

		To show \eqref{eq:pibar}, we note that \eqref{eq:trackerrodytriamatrix} implies that $$ \mathcal{P} (k+2) \leq \Lambda(k+1)\Lambda(k)\mathcal{P}(k) + \mathbf{1}_n \delta \sum_{t=k}^{k+1}\lambda^{t} $$

		For $i\in \mathsf{O}_1$,  it is clear that $ \Lambda_{i}(k)\mathbf{1}_n = 1 - \gamma_{i0}(k) \triangleq  \eta_{i}(k) < 1 $, where $ \gamma_{i0}(k)>0 $ due to $ 0\in \mathcal{\bar{N}}_{i}$. Note that $ \Lambda(k)\mathbf{1}_n < \kappa_{i}(k) $, where $ \kappa_{i}(k) \in \mathbb{R}^n $ is obtained by replacing the $i^{th}$ entry of $\mathbf{1}_n$ with $\eta_{i}(k)$. Then, we have
		\begin{equation*}
			\begin{split}
				\Lambda_i(k+1)\Lambda(k)\mathbf{1}_n &\leq \Lambda_i(k+1)\kappa_{i}(k) \\
				&\leq \gamma_{i}(k+1)\eta_{i}(k) + \sum_{j\in\mathcal{N}_{i}} \gamma_{ij}(k+1) \\
				&= \eta_{i}(k) - \gamma_{i}(k+1)\gamma_{i0}(k+1) \\
				&\triangleq \eta_{i} \leqslant \eta_{i}(k) < 1
			\end{split}
		\end{equation*}
		
		For $j\in \mathsf{O}_2$, which is connected to the above $i\in \mathsf{O}_1$, it follows from $ \Lambda(k)\mathbf{1}_n < \kappa_{i}(k) $ that
		\begin{equation*}
			\begin{split}
				\Lambda_j(k+1)&\Lambda(k)\mathbf{1}_n \leq \Lambda_j(k+1)\kappa_{i}(k) \\
				& = \Lambda_j(k+1)\mathbf{1}_n - \gamma_{ji}(k+1) + \gamma_{ji}(k+1)\eta_{i}(k) \\
				& = 1 - \gamma_{ji}(k+1)\left(1-\eta_{i}(k)\right) \triangleq \eta_{j} < 1
			\end{split}
		\end{equation*}
		Swithching $j$ to $i$, and combine the two cases above, we have
		\begin{equation*}
			\begin{split}
				\left\lVert \boldsymbol{\bar{p}}_{i}(k+2)\right\rVert &\leq 
				\Lambda_i(k+1)\Lambda(k)\mathcal{P}(k) + \delta \sum_{t=k}^{k+1}\lambda^{t} \\
				&\leq \Lambda_i(k+1)\Lambda(k)\mathbf{1}_n\mathsf{Q}(k) + \delta \sum_{t=k}^{k+1}\lambda^{t} \\
				&\leq \eta_{i} \mathsf{Q}(k) + \delta \sum_{t=k}^{k+1}\lambda^{t} , \forall i\in \mathcal{V}
			\end{split}
		\end{equation*}
		Further, by letting $\eta=\min_{i\in\mathcal{V}}\eta_{i} \in (0,1) $, we can obtain
		\begin{equation}\label{eq:pibarconv}
			\mathsf{Q}(k+2) \leq \eta \mathsf{Q}(k) + \delta \sum_{t=k}^{k+1}\lambda^{t}
		\end{equation}
		which implies the exponentially convergent of $\boldsymbol{\bar{p}}_{i}$, i.e. the center of the time-varying formation $\boldsymbol{p}(k)$ converges to the target's position $\boldsymbol{p}_0(k)$ exponentially fast under the tracking control law \eqref{eq:formationtrackingcontrol}.

	\end{proof}

	\begin{figure}[tbp]
		\centering
		\includegraphics[width=\linewidth]{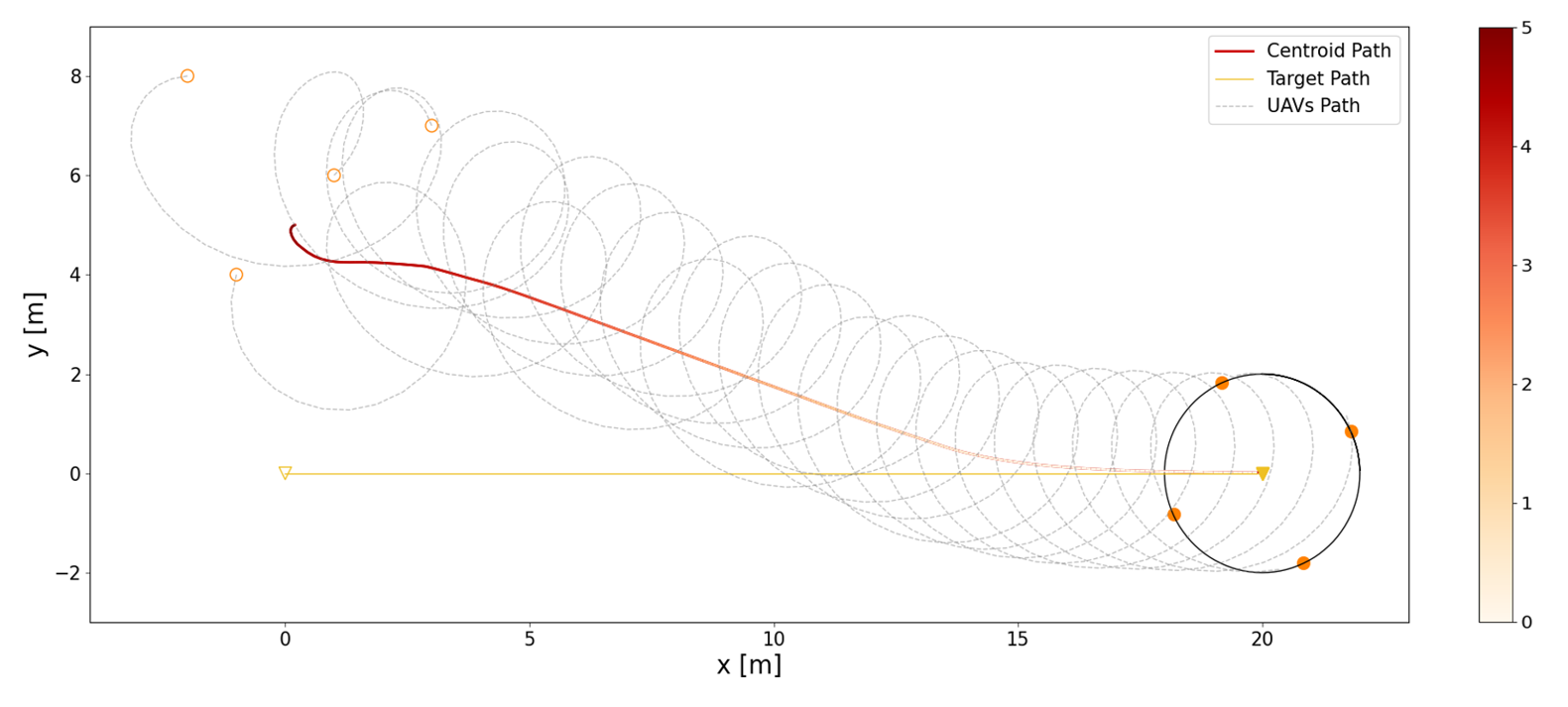}
		\caption{ Four UAVs tracking a moving target. The centroid path gradually converges to the target path. The gradient color of the path indicates a gradual reduction in tracking error.   }
		\label{fig:tracking}
	\end{figure}

	\begin{figure}[tbp]
		\centering
		\includegraphics[width=\linewidth]{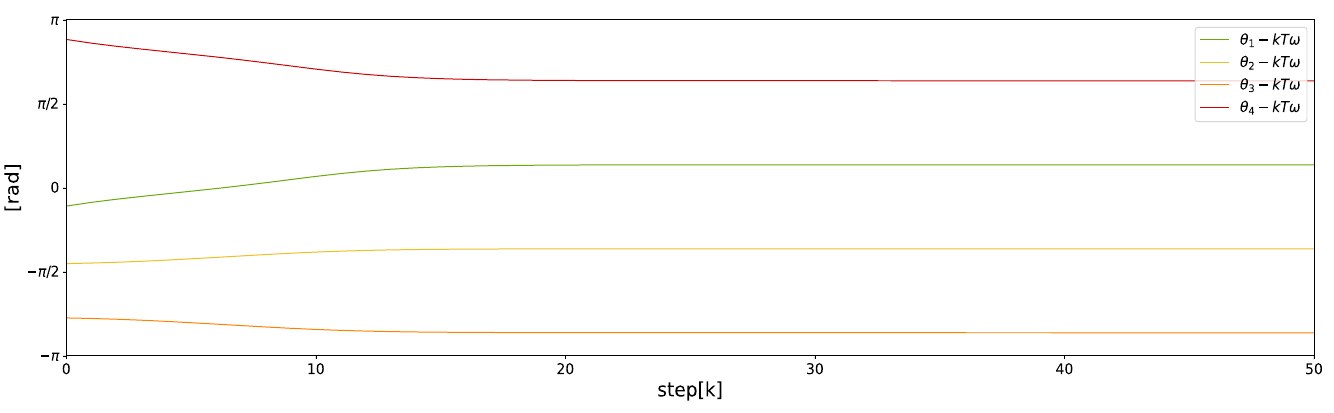}
		\caption{The relative phase of UAVs, denoted by $\theta_i(k)-kT\omega_i $.   }
		\label{fig:theta}
	\end{figure}

	\begin{figure}[tbp]
		\centering
		\includegraphics[width=0.95\linewidth]{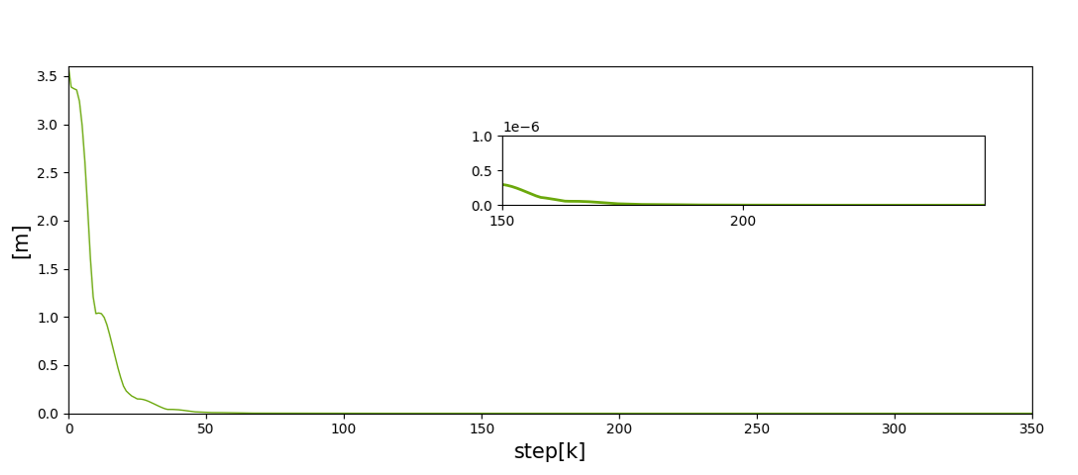}
		\caption{The relative position error $\max_{(i,j)\in\mathcal{\bar{E}}} \left\lVert \boldsymbol{\tilde{p}}_{ij}(k) \right\rVert$.  }.
		\label{fig:rlerror}
	\end{figure}

	\begin{figure}[tbp]
		\centering
		\includegraphics[scale=0.3]{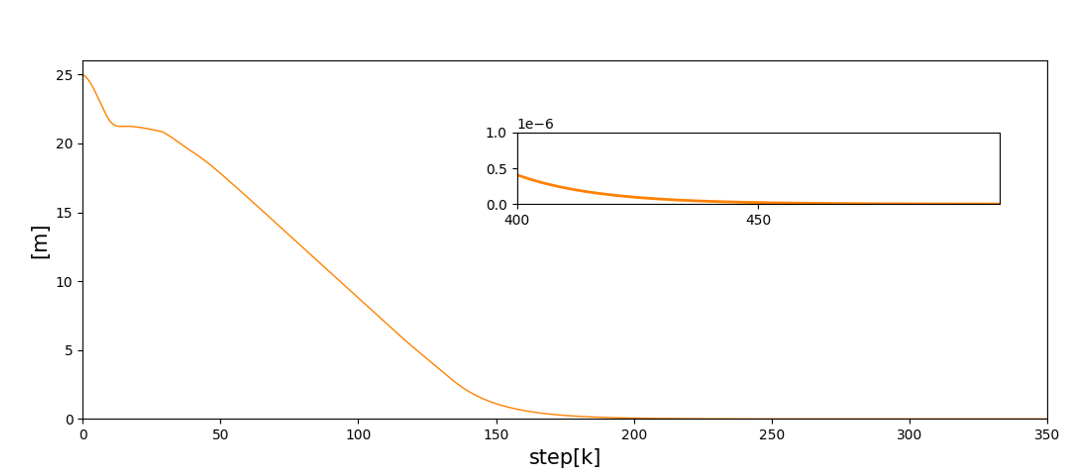}
		\caption{The formation tracking error $\left\lVert \boldsymbol{\bar{p}}(k) \right\rVert$.  }
		\label{fig:trackerror}
	\end{figure}

	\section{Simulation and Experiments} \label{sec:simulation}

    \subsection{Numerical Simulations}
	This section contains simulation results to support the theoretical findings in this paper. We consider four UAVs tracking a moving target that only UAV $1$ senses the target. For the discrete time Kuramoto model based algorithm \eqref{eq:osc}, we give different initial value for $ \theta_i(0) $, set the frequency $\omega_i = \pi / 2 $ , and choose $K_l = 1$ for $ l\in\{1,...,n-1\}$, and $K_n = -1$. Let $\boldsymbol{\hat{p}}_{ij}(0) = \mathbf{0}_m $ and $ \varGamma_{ij}(0) = \mathbf{I} $ as the initial value for the least-square estimator \eqref{eq:estimator}, and choose $\beta_f=0.7$. We select $T=0.125$, $\beta=7$, $\bar{U}_{i}=0.4$ , and $\bar{a}_{ij}=1/\|\mathcal{N}_i\| $ for the formation tracking controller \eqref{eq:formationtrackingcontrol}. 
	
	The first simulation result, Fig.~\ref{fig:shapedy}, shows the ability to cope with complex environments through affine transformation. The time-varying reference of the radius is $\rho = 2\sin(k\pi /100)+4 $ by setting $S_x=S_y=0.5\sin(k\pi /100)+1 $ and the center corresponds to the moving target with $ \boldsymbol{p}_0(k) =\left[ 0.1k,3\sin(\pi/150*k) \right]^{T}  $. At time step $k=250$, we assume that UAV $4$ is damaged, and the remaining UAVs still can circle the target in a balanced pattern.

	As shown in Fig.~\ref{fig:tracking}, we simulated tracking a moving target with $ \boldsymbol{p}_0(k) =\left[ 0.1k,0 \right]^{T} $. The centroid path of the formation gradually approaches the target path. Fig.~\ref{fig:theta} shows the relative phase between UAVs, thus pattern formation is achieved by the oscillator network \eqref{eq:osc}. The relative position error and the formation tracking error can be seen in Fig.~\ref{fig:rlerror} and Fig.~\ref{fig:trackerror} respectively.

	\begin{figure}[tbp]
		\centering
		\includegraphics[width=\linewidth]{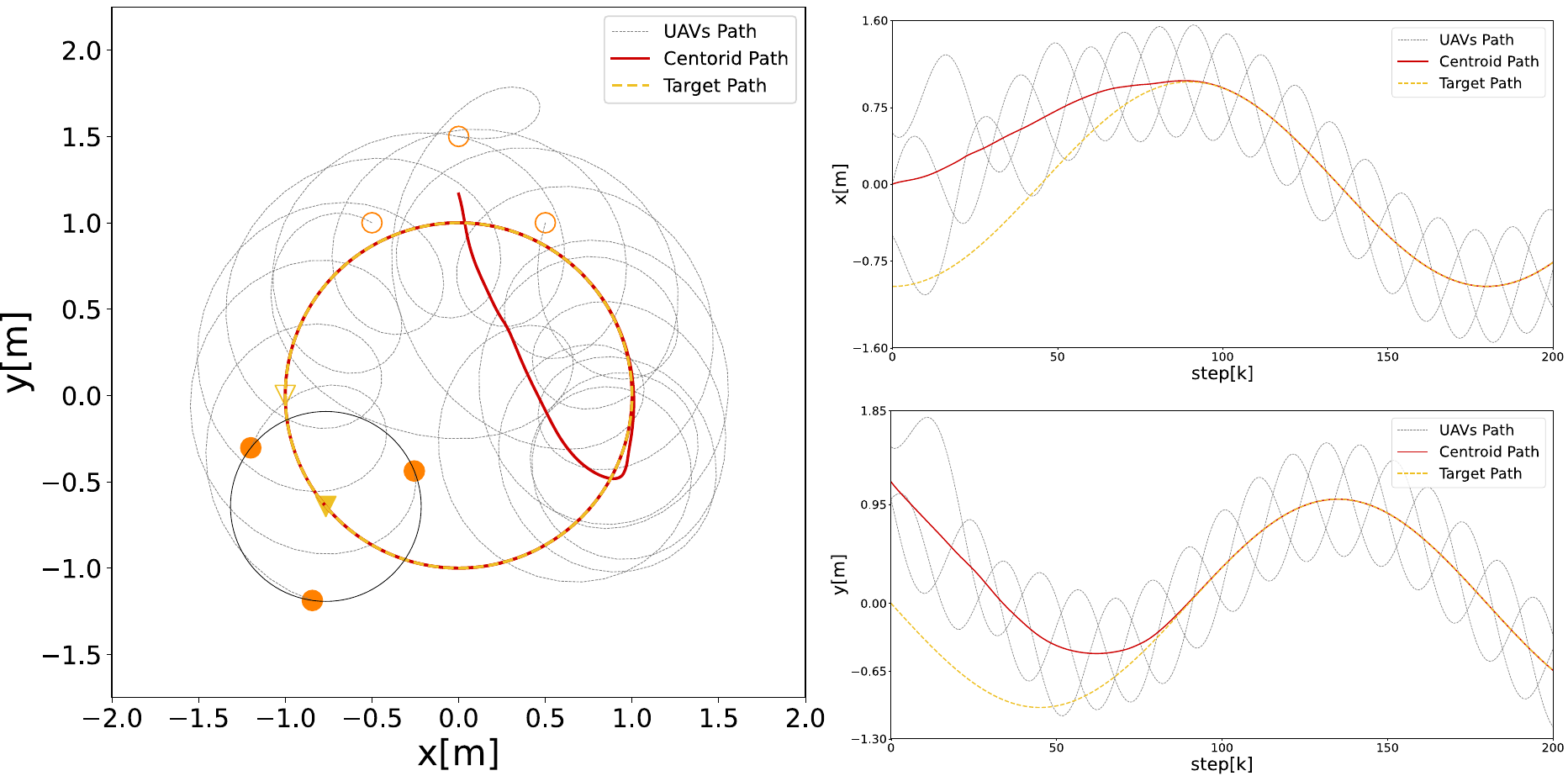}
		\caption{ Experimental results: (a) The path of the UAVs in $XY$-plane. (b) The position of UAVs in $x$-axis and $y$-axis respectively. The initial positions are represented in void style, and given the position of UAVs at time step $k=250$ in solid style.}.
		\label{fig:realxy}
	\end{figure}

	\begin{figure}[tbp]
		\centering
		\includegraphics[scale=0.3]{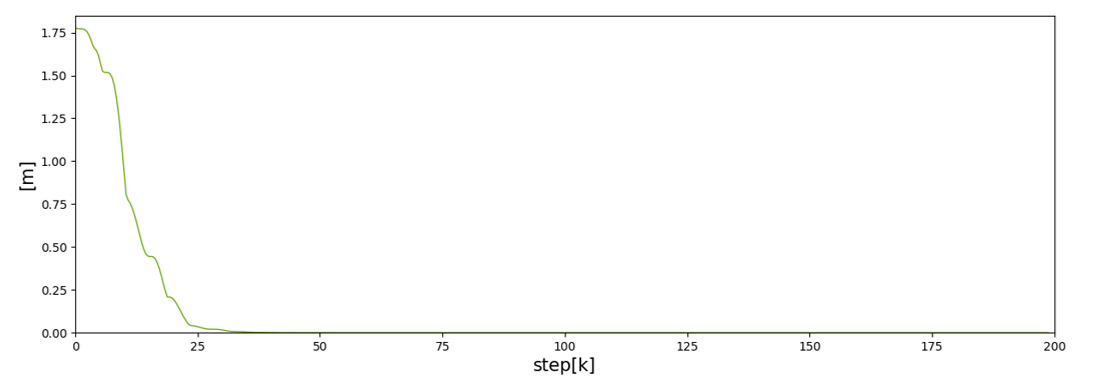}
		\caption{ Experimental results: the relative position error $\max_{(i,j)\in\mathcal{\bar{E}}} \left\lVert \boldsymbol{\tilde{p}}_{ij}(k) \right\rVert$.  }
		\label{fig:realrl}
	\end{figure}

	\begin{figure}[tbp]
		\centering
		\includegraphics[scale=0.23]{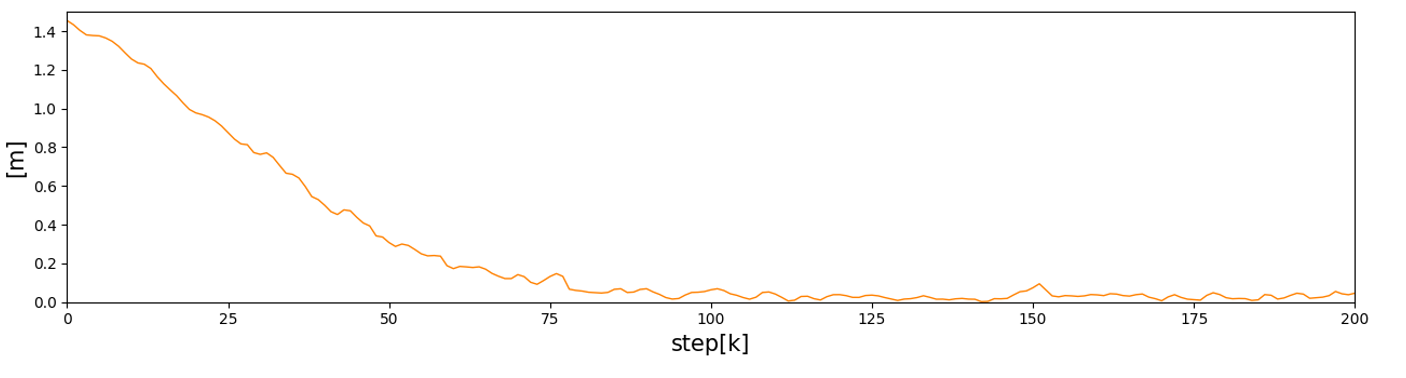}
		\caption{Formation tracking error $\left\lVert \boldsymbol{\bar{p}}(k) \right\rVert$ in the experiment.  }
		\label{fig:realtrack}
	\end{figure}
 
    \subsection{Experimental Results }
	To further verify the real-time performance of the proposed method, an experiment of four Crazyflie 2.0 UAVs is implemented in an indoor testing area. One of the Crazyflies was considered as the target, flying at an altitude of $z=0.3m$ above the ground in a circle trajectory $ \boldsymbol{p}_0(k) =\left[ \cos(k\pi/90),\sin(k\pi/90) \right]^{T}  $. The other three UAVs will be tracking the target at a higher altitude $z=1.4m$. The UAVs are tracked by the OptiTrack motion capture system. We choose $T=0.1$, $\bar{U}_{i}=0.3$, and assume that all the UAVs can sense the target one. Thanks to the cascaded PID controller of Crazyflie 2.0, we control the UAVs by giving desired setpoints \eqref{eq:dynamics}. The trajectory of the UAVs is mapped to $XY$-plane (see Fig.~\ref{fig:realxy}). Experimental results of the relative position error and the formation tracking error can be seen in Fig.~\ref{fig:realrl} and Fig.~\ref{fig:realtrack} respectively. The errors may be due to the instability caused by wind disturbances between the UAVs and the existence of measurement errors. From the experimental results, we can conclude that the proposed control algorithm reaches a reasonable performance. 

    \section{Conclusion} \label{sec:conclution}
    This paper presented a new control strategy to stabilize a group of UAVs to enclose a moving target via a distance-displacement-based relative localization method. By embedding the Kuramoto-based pattern design scheme into RLSE, we met the persistent excitation condition of relative displacements which is further facilitate the exponential convergence of relative position estimation. Then, a new framework based on affine transformations was proposed to obtain more complex time-varying formations to respond to more practical needs. Based on the relative position estimate, a formation tracking control scheme was designed to achieve the center of the formation tracking the moving target. The main advantage of this approach is that in the case of only a small number of UAVs sensing the target, the enclosing objective can also be realized. Numerical and empirical examples were used to demonstrate the theoretical findings.
    Future works will focus on extending this control strategy including some additional constraints such as adding potential terms to ensure obstacle avoidance and planning more flexible trajectories while ensuring that the persistent excitation condition is satisfied. 
        

	\bibliography{CDC2023.bib,references.bib}
	\bibliographystyle{IEEEtran}
	
\end{document}